\providecommand{\tabularnewline}{\\}
 \definecolor{BLACK}{gray}{0}
 \definecolor{WHITE}{gray}{1}
 \definecolor{RED}{rgb}{1,0,0}
 \definecolor{GREEN}{rgb}{0,1,0}
 \definecolor{BLUE}{rgb}{0,0,1}
 \definecolor{CYAN}{cmyk}{1,0,0,0}
 \definecolor{MAGENTA}{cmyk}{0,1,0,0}
 \definecolor{YELLOW}{cmyk}{0,0,1,0}
\theoremstyle{plain}
\newtheorem{thm}{Theorem}
  \theoremstyle{plain}
  \newtheorem{prop}[thm]{Proposition}
\newcommand{\bra}[1]{\langle #1|}
\newcommand{\ket}[1]{|#1\rangle}
\newcommand{\braket}[1]{\langle#1\rangle}
\begin{document}

\title{Easy implementable algorithm for the geometric measure of entanglement}
\begin{abstract}
We present an easy implementable algorithm for approximating the geometric
measure of entanglement from above. The algorithm can be applied to
any multipartite mixed state. It involves only the solution of an
eigenproblem and finding a singular value decomposition, no further
numerical techniques are needed. To provide examples, the algorithm
was applied to the isotropic states of 3 qubits and the 3-qubit XX
model with external magnetic field.
\end{abstract}

\author{Alexander Streltsov}

\email{streltsov@thphy.uni-duesseldorf.de}

\author{Hermann Kampermann}

\author{Dagmar Bruß}

\affiliation{Heinrich-Heine-Universit\"{a}t D\"{u}sseldorf, Institut f\"{u}r
Theoretische Physik III, D-40225 D\"{u}sseldorf, Germany}

\maketitle

\section{Introduction}

Quantum entanglement as a fascinating nonclassical feature has attracted
attention since the early days of the quantum theory \cite{Einstein1935,Schroedinger1935}.
In the last decades its importance for the quantum information theory
has been recognized, since entanglement plays a crucial role in almost
every quantum computational task \cite{Nielsen2000}. 

A bipartite pure state is said to be entangled, if it cannot be written
in the product form \begin{equation}
\ket{\psi_{\mathrm{sep}}^{AB}}=\ket{\psi^{A}}\otimes\ket{\psi^{B}}.\end{equation}
States which are not entangled are called separable. In general, the
number of parties is $n\geq2$, and fully separable pure states become
\begin{equation}
\ket{\psi_{\mathrm{sep}}}=\otimes_{i=1}^{n}\ket{\psi^{\left(i\right)}}.\end{equation}
The theory of entanglement has also been extended to the case, when
the quantum state is not pure \cite{Werner1989,Horodecki2009}. Then
a mixed state $\rho_{\mathrm{sep}}$ is called separable, if it can
be written as a convex combination of separable pure states: \begin{equation}
\rho_{\mathrm{sep}}=\sum_{i}p_{i}\otimes_{i=1}^{n}\ket{\psi^{\left(i\right)}}\bra{\psi^{\left(i\right)}}\end{equation}
with nonnegative probabilities $p_{i}$, $\sum_{i}p_{i}=1$. Quantification
of entanglement is one of the main research areas in quantum information
theory \cite{Horodecki2009}. For bipartite pure states, the entanglement
is usually quantified using the von Neumann entropy of the reduced
state: \begin{equation}
E\left(\ket{\psi^{AB}}\right)=-\mathrm{Tr}\left[\rho^{A}\log_{2}\rho^{A}\right],\label{eq:E}\end{equation}
where $\rho^{A}=\mathrm{Tr}_{B}\left[\ket{\psi^{AB}}\bra{\psi^{AB}}\right]$.
For multipartite systems and mixed states many different measures
of entanglement were proposed \cite{Plenio2007,Horodecki2009}. In
general, a measure of entanglement is any continuous function $E$
on the space of mixed states $\rho$ which satisfies at least the
following properties \cite{Horodecki2009}: 
\begin{itemize}
\item $E$ is nonnegative and zero if and only if the state is separable.
\item $E$ does not increase under local operations and classical communication:
\[
E\left(\Lambda\left(\rho\right)\right)\leq E\left(\rho\right),\]
where $\Lambda$ is any LOCC operation.
\end{itemize}
For bipartite mixed states, an important measure of entanglement is
the entanglement of formation $E_{f}$. For pure states it is defined
as the von Neumann entropy of the reduced state as given in (\ref{eq:E}).
The extension to mixed states is done via the convex roof construction
\cite{Uhlmann1997,Bennett1996}: \begin{equation}
E_{f}\left(\rho\right)=\min\sum_{i}p_{i}E\left(\ket{\psi_{i}}\right),\end{equation}
where the minimum is taken over all pure state decompositions of $\rho$.

In this paper we will consider the geometric measure of entanglement.
For pure states it is defined as follows \cite{Wei2003}: \begin{equation}
E_{G}\left(\ket{\psi}\right)=1-\max_{\ket{\phi}\in S}\left|\braket{\psi|\phi}\right|^{2},\end{equation}
where the maximization is done over the set of separable states $S$.
For mixed states $\rho$ the geometric measure of entanglement was
originally defined via the convex roof construction, in the same way
as it was done for the entanglement of formation \cite{Wei2003}:
\begin{equation}
E_{G}\left(\rho\right)=\min\sum_{i}p_{i}E_{G}\left(\ket{\psi_{i}}\right)\label{eq:Eg}\end{equation}
with minimization over all pure state decompositions of $\rho$. Similar
measures of entanglement were also considered earlier in \cite{Shimony1995,Barnum2001}. 

If $\rho$ is a two-qubit state, general expressions for $E_{f}$
and $E_{G}$ are known \cite{Wootters1998,Vidal2000,Wei2003}: \begin{eqnarray}
E_{f}\left(\rho\right) & = & h\left(\frac{1}{2}+\frac{1}{2}\sqrt{1-C\left(\rho\right)^{2}}\right),\\
E_{G}\left(\rho\right) & = & \frac{1}{2}\left(1-\sqrt{1-C\left(\rho\right)^{2}}\right).\label{eq:twoqubits}\end{eqnarray}
The concurrence $C\left(\rho\right)$ is given by \begin{equation}
C\left(\rho\right)=\max\left\{ 0,\lambda_{1}-\lambda_{2}-\lambda_{3}-\lambda_{4}\right\} ,\end{equation}
where $\lambda_{i}$ are the square roots of the eigenvalues of $\rho\cdot\tilde{\rho}$
in decreasing order, and $\tilde{\rho}$ is defined as $\tilde{\rho}=\left(\sigma_{y}\otimes\sigma_{y}\right)\rho^{\star}\left(\sigma_{y}\otimes\sigma_{y}\right)$. 

For most quantum states no exact expression for any measure of entanglement
is known, and thus numerical algorithms must be used. One of the first
algorithms computing entanglement has been presented in \cite{.Zyczkowski1999}.
There the entanglement of formation was approximated using a random
walk algorithm on the space of the decompositions of the given mixed
state. A much faster algorithm for the entanglement of formation was
presented in \cite{Audenaert2001}. This algorithm made use of the
conjugate gradient method. In \cite{Rothlisberger2009} the authors
extended and improved the algorithm. The authors also applied the
algorithm to the convex roof extension of the multipartite Meyer-Wallach
measure \cite{Meyer2002}. 

In this paper we present an algorithm for the geometric measure of
entanglement. The algorithm is easy to implement, since every step
is either the solution of an eigenproblem or finding a singular value
decomposition of a matrix, and no further numerical techniques are
needed.

This paper is organized as follows. In Section \ref{sec:Algorithm}
we present the algorithm for pure and mixed states. We also discuss
its properties and convergence. In Section \ref{sec:Applications}
we test our algorithm on bipartite and multipartite mixed states with
known value of the geometric measure of entanglement. Further we compute
an approximation of the geometric measure of entanglement for the
isotropic states of three qubits, and the three-qubit XX model with
a constant magnetic field. We conclude in Section \ref{sec:Concluding-remarks}.

\section{\label{sec:Algorithm}Algorithm}

Before we present our algorithm for general multipartite states, we
begin with bipartite and multipartite pure states.

\subsection{Pure states}

\subsubsection{\label{sub:Bipartite-states}Bipartite states}

For bipartite pure states $\ket{\psi}\in\mathcal{H}_{1}\otimes\mathcal{H}_{2}$
the geometric measure of entanglement is given by \cite{Shimony1995}\begin{equation}
E_{G}\left(\ket{\psi}\right)=1-\lambda_{max}^{2},\end{equation}
where $\lambda_{max}$ is the largest Schmidt coefficient of $\ket{\psi}$.
Note that $\lambda_{max}^{2}$ is also the maximal eigenvalue of $\mathrm{Tr}_{1}\left[\ket{\psi}\bra{\psi}\right]$
and $\mathrm{Tr}_{2}\left[\ket{\psi}\bra{\psi}\right]$. Further let
$\ket{\phi_{1}}\in\mathcal{H}_{1}$ and $\ket{\phi_{2}}\in\mathcal{H}_{2}$
be the eigenstate corresponding to the maximal eigenvalue of $\mathrm{Tr}_{2}\left[\ket{\psi}\bra{\psi}\right]$,
and $\mathrm{Tr}_{1}\left[\ket{\psi}\bra{\psi}\right]$ respectively.
Then the state $\ket{\phi}=\ket{\phi_{1}}\otimes\ket{\phi_{2}}$ is
a closest separable state to $\ket{\psi}$.

\subsubsection{\label{sub:Multipartite-states}Multipartite states}

If we consider pure states $\ket{\psi}$ on an $n$-partite Hilbert
space $\mathcal{H}\in\otimes_{i=1}^{n}\mathcal{H}_{i}$ with $n>2$,
the geometric measure of entanglement is only known for a few special
cases \cite{Wei2003,Tamaryan2008}. In \cite{Shimoni2005,Most2010}
the authors presented an algorithm for an approximation of $E_{G}$
for pure states. For simplicity we will discuss the algorithm from
\cite{Shimoni2005,Most2010} for a pure state of three qubits, a generalization
to arbitrary systems is done in the end of this section.

Let $\ket{\psi}$ be the given state of three qubits. The algorithm
starts with a random product state $\ket{\phi_{0}}=\ket{0_{0}^{\left(1\right)}}\ket{0_{0}^{\left(2\right)}}\ket{0_{0}^{\left(3\right)}}$
of three qubits, where the lower index will be used for counting the
steps of the algorithm and the upper index denotes the {}``number''
of the qubit. Now we consider $\ket{\tilde{\psi}}=\left(\bra{0_{0}^{\left(2\right)}}\bra{0_{0}^{\left(3\right)}}\right)\ket{\psi}$,
which is a pure unnormalized state on the space of the first qubit.
If we want to maximize the overlap $\left|\braket{\phi_{0}|\psi}\right|$
for fixed states $\ket{0_{0}^{\left(2\right)}}$ and $\ket{0_{0}^{\left(3\right)}}$,
we have to replace $\ket{0_{0}^{\left(1\right)}}$ by the state $\ket{0_{1}^{\left(1\right)}}=\frac{1}{\sqrt{\braket{\tilde{\psi}|\tilde{\psi}}}}\ket{\tilde{\psi}}$.
The procedure is repeated for the second qubit, starting in the product
state $\ket{0_{1}^{\left(1\right)}}\ket{0_{0}^{\left(2\right)}}\ket{0_{0}^{\left(3\right)}}$
and resulting in the state $\ket{0_{1}^{\left(1\right)}}\ket{0_{1}^{\left(2\right)}}\ket{0_{0}^{\left(3\right)}}$.
Finally, the same maximization is done for the third qubit with the
final state $\ket{\phi_{1}}=\ket{0_{1}^{\left(1\right)}}\ket{0_{1}^{\left(2\right)}}\ket{0_{1}^{\left(3\right)}}$.
In the same way we define the product state $\ket{\phi_{n}}=\ket{0_{n}^{\left(1\right)}}\ket{0_{n}^{\left(2\right)}}\ket{0_{n}^{\left(3\right)}}$
to be the result of $n$ iterations of the algorithm. In the following
we will prove some properties of the algorithm.
\begin{prop}
\label{pro:limit}Let $\ket{000}=\lim_{n\rightarrow\infty}\ket{\phi_{n}}$
be the product state after an infinite number of steps of the algorithm,
then holds: \begin{equation}
\braket{100|\psi}=\braket{010|\psi}=\braket{001|\psi}=0.\label{eq:limit}\end{equation}
\end{prop}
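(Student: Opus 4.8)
The plan is to exploit the optimality of each single-qubit update: after the algorithm converges, none of the three local states can be improved, and this stationarity is exactly what forces the three overlaps to vanish. I would set up notation writing $\ket{000}=\ket{0^{(1)}}\ket{0^{(2)}}\ket{0^{(3)}}$ for the limit state, and complete each single-qubit basis by choosing $\ket{1^{(i)}}$ orthogonal to $\ket{0^{(i)}}$. The claim is symmetric in the three qubits, so it suffices to prove $\braket{100|\psi}=0$; the other two identities follow by relabelling.

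First I would argue that the limit state is a fixed point of the update map. Each iteration of the algorithm does not decrease the overlap $|\braket{\phi_n|\psi}|$ (each step replaces a local factor by the one maximizing the overlap with the other two held fixed), so $|\braket{\phi_n|\psi}|$ is nondecreasing and bounded by $1$, hence convergent; and by continuity of the update, the limit $\ket{000}$ must be invariant under the update of any single qubit. In particular, updating the first qubit of $\ket{000}$ returns $\ket{0^{(1)}}$ (up to phase). Now recall how that update works: with $\ket{\tilde\psi}=\bigl(\bra{0^{(2)}}\bra{0^{(3)}}\bigr)\ket{\psi}\in\mathcal H_1$, the update sets the new first-qubit state to $\ket{\tilde\psi}/\sqrt{\braket{\tilde\psi|\tilde\psi}}$. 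Fixed-point invariance therefore says $\ket{\tilde\psi}\propto\ket{0^{(1)}}$. Projecting onto the orthogonal vector $\ket{1^{(1)}}$ gives $\braket{1^{(1)}|\tilde\psi}=0$, which is precisely $\braket{100|\psi}=\bigl(\bra{1^{(1)}}\bra{0^{(2)}}\bra{0^{(3)}}\bigr)\ket{\psi}=0$. Repeating with qubits $2$ and $3$ yields the remaining two equations.

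There is one subtlety I would handle carefully: the degenerate case $\braket{\tilde\psi|\tilde\psi}=0$, in which the update is undefined or leaves the qubit unchanged; but then $\braket{100|\psi}=\braket{1^{(1)}|\tilde\psi}=0$ holds trivially, so the conclusion survives. A second point requiring a line of care is the phase: the fixed point need only satisfy $\ket{\tilde\psi}=e^{i\theta}\lambda\ket{0^{(1)}}$ for some real $\lambda\geq0$, but this still forces the component along $\ket{1^{(1)}}$ to be zero, which is all we need. I expect the main obstacle to be making the passage from "the algorithm converges" to "the limit is a genuine fixed point of each single-qubit update" fully rigorous — one must note that the sequence of iterates lies in a compact set (a product of unit spheres modulo phase), extract the limit, and invoke continuity of each update map on the region where it is defined; handling the measure-zero locus where an update is discontinuous is what the degenerate-case remark above takes care of.
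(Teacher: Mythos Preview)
Your proposal is correct and takes essentially the same approach as the paper: both arguments rest on the observation that the limit $\ket{000}$ must be locally optimal in each single-qubit factor, which forces the three overlaps to vanish. The paper presents this as a two-line contradiction (if $\braket{100|\psi}\neq0$ then some $\ket{\phi^{(1)}}\ket{00}$ has strictly larger overlap, so $\ket{000}$ could not be the limit), while you unfold the same idea as a direct fixed-point argument and handle the degenerate case $\ket{\tilde\psi}=0$ explicitly.
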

\begin{proof}
If $\braket{100|\psi}\neq0$, then there exists a product state of
the form $ $$\ket{\phi}=\ket{\phi^{\left(1\right)}}\ket{00}$ such
that $\left|\braket{\phi|\psi}\right|>\left|\braket{000|\psi}\right|$.
This means that $\ket{000}\neq\lim_{n\rightarrow\infty}\ket{\phi_{n}}$,
which is a contradiction to the definition of $\ket{000}$. Using
the same argument it can be seen that $\braket{010|\psi}=\braket{001|\psi}=0$
also holds.
\end{proof}
From Proposition \ref{pro:limit} we see that the state $\ket{\psi}$
can be written as follows: \begin{equation}
\ket{\psi}=\lambda_{1}\ket{000}+\lambda_{2}\ket{110}+\lambda_{3}\ket{101}+\lambda_{4}\ket{011}+\lambda_{5}\ket{111},\label{eq:generalized schmidt}\end{equation}
where four of the coefficients $\lambda_{i}$ can be chosen real and
nonnegative, and $\sum_{i}\left|\lambda_{i}\right|^{2}=1$. The form
(\ref{eq:generalized schmidt}) is also known as generalized Schmidt
decomposition \cite{Ac'in2000,Carteret2000}.
\begin{prop}
The algorithm computes a generalized Schmidt decomposition of a pure
state with an arbitrary given precision.\end{prop}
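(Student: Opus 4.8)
The plan is to combine Proposition~\ref{pro:limit} with a compactness/convergence argument showing that the iterates $\ket{\phi_n}$ actually converge, so that the limiting object $\ket{000}$ exists and the generalized Schmidt form (\ref{eq:generalized schmidt}) is reached in the limit. First I would observe that the sequence of overlaps $\left|\braket{\phi_n|\psi}\right|$ is monotonically non-decreasing: each of the three sub-steps (optimizing qubit $1$, then $2$, then $3$) replaces the current single-qubit factor by the one that maximizes the overlap with the partially-projected state, so the overlap can only grow or stay the same. Being bounded above by $1$, the sequence $\left|\braket{\phi_n|\psi}\right|$ converges to some limit $\mu$. Next I would use compactness of the product of unit spheres (the set of three-qubit product states is compact) to extract a convergent subsequence $\ket{\phi_{n_k}}\to\ket{000}$, and argue that the limit is a fixed point of the iteration map: if not, one more round of optimization would strictly increase the overlap past $\mu$, contradicting continuity and the limit of the overlaps. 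A fixed point of the iteration is precisely a product state for which no single-qubit replacement improves the overlap, which is exactly condition (\ref{eq:limit}).

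Having established that such a limiting product state $\ket{000}$ exists and satisfies (\ref{eq:limit}), I would invoke the already-derived consequence that $\ket{\psi}$ then admits the representation (\ref{eq:generalized schmidt}): orthogonality of $\ket{100},\ket{010},\ket{001}$ to $\ket{\psi}$ kills exactly the three coefficients $\lambda_{\ket{100}},\lambda_{\ket{010}},\lambda_{\ket{001}}$ in the expansion of $\ket{\psi}$ in the product basis built from $\{\ket{0^{(i)}},\ket{1^{(i)}}\}$, leaving the five-term generalized Schmidt decomposition, and a global phase plus local phases can be used to make four of the five coefficients real and nonnegative. For the \emph{arbitrary given precision} claim, I would make the convergence quantitative: since $\left|\braket{\phi_n|\psi}\right|\uparrow\mu$, for any $\varepsilon>0$ there is an $N$ with $\mu-\left|\braket{\phi_N|\psi}\right|<\varepsilon$, and then the ``residual'' weight of $\ket{\psi}$ outside the span of the five basis vectors adapted to $\ket{\phi_N}$ is controlled by this gap. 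Concretely, after the step that optimizes qubit $i$ the component of $\ket{\psi}$ along the ``flipped'' state on qubit $i$ has squared norm bounded by the increase available at the next steps, hence by $\mathcal{O}(\varepsilon)$; running one full round from $\ket{\phi_N}$ therefore produces coefficients $\lambda_{\ket{100}},\lambda_{\ket{010}},\lambda_{\ket{001}}$ all of size $\mathcal{O}(\sqrt{\varepsilon})$, i.e.\ an approximate generalized Schmidt decomposition with error tending to zero.

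I expect the main obstacle to be the convergence argument itself rather than the algebra: the monotone-overlap argument only gives convergence of the \emph{scalars} $\left|\braket{\phi_n|\psi}\right|$, not a~priori of the \emph{states} $\ket{\phi_n}$, and the iteration map can in principle have several fixed points (the landscape of $\left|\braket{\phi|\psi}\right|$ over product states is generally non-convex). What saves the precision statement is that we do not need the full sequence to converge to a single limit --- it suffices that the overlap gap $\mu-\left|\braket{\phi_n|\psi}\right|$ becomes arbitrarily small, which is automatic from monotonicity and boundedness, and that a small gap forces the three off-diagonal coefficients to be small. I would also need to check the edge case where a projected single-qubit state $\ket{\tilde\psi}$ vanishes (zero norm), so that the normalization $\tfrac{1}{\sqrt{\braket{\tilde\psi|\tilde\psi}}}$ is ill-defined; in that degenerate situation the overlap is already zero and any choice of replacement is harmless, so it does not affect monotonicity or the conclusion. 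Finally, I would note that the generalization to $n$ qudits is verbatim: replace ``qubit'' by ``party,'' ``$\ket{1^{(i)}}$'' by an orthonormal completion on $\mathcal{H}_i$, and the five-term expansion by the corresponding reduced expansion in which all basis vectors differing from $\ket{0\cdots0}$ in exactly one tensor factor have vanishing coefficient.
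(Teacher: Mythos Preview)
Your proposal is correct and rests on the same core observation as the paper---that once the three ``single-flip'' overlaps $\braket{100|\psi}$, $\braket{010|\psi}$, $\braket{001|\psi}$ vanish (or are small), the state is (approximately) in generalized Schmidt form---but the route you take is genuinely different from the paper's. The paper's proof is very short and purely constructive: it defines the candidate approximation $\ket{\psi_n}$ by expanding $\ket{\psi}$ in the product basis $\{\ket{ijk}_n\}$ produced after $n$ rounds, setting the three offending coefficients $b_{100},b_{010},b_{001}$ to zero by hand, and renormalising. The trace distance then equals exactly $\sqrt{|\braket{\psi|100}_n|^2+|\braket{\psi|010}_n|^2+|\braket{\psi|001}_n|^2}$, and Proposition~\ref{pro:limit} is simply cited to conclude this tends to zero. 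No compactness, fixed-point, or quantitative gap argument appears.

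What your approach buys is rigour on a point the paper glosses over: Proposition~\ref{pro:limit} is stated under the hypothesis that $\lim_n\ket{\phi_n}$ exists, and the paper never verifies this. Your monotonicity-of-overlaps argument plus the observation that the single-step gain $|\braket{100|\psi}_n|^2=(a_{n+1}^{(1)})^2-a_n^2$ is bounded by the remaining overlap increment actually closes that gap without needing full convergence of the states, which is a nice strengthening. Conversely, what the paper's route buys is an explicit approximant $\ket{\psi_n}$ with an exact error formula, which makes the ``arbitrary precision'' claim immediate once one accepts the limit. If you want to merge the two, the cleanest write-up would be: keep the paper's explicit $\ket{\psi_n}$ and its distance formula, and replace the bare citation of Proposition~\ref{pro:limit} by your quantitative bound showing each of the three terms is $\mathcal{O}\big(\sqrt{a_{n+1}^2-a_{n-1}^2}\big)$.
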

\begin{proof}
In order to find a generalized Schmidt decomposition with a given
precision $\varepsilon$ we need to find five parameters $\mu_{i}$
with $\sum_{i=1}^{5}\left|\mu_{i}\right|^{2}=1$ and a product basis
$\left\{ \ket{ijk}\right\} $ such that the state \begin{equation}
\ket{\psi_{\mathrm{approx}}}=\mu_{1}\ket{000}+\mu_{2}\ket{110}+\mu_{3}\ket{101}+\mu_{4}\ket{011}+\mu_{5}\ket{111}\end{equation}
is closer to $\ket{\psi}$ than $\varepsilon$, i.e. $D\left(\ket{\psi},\ket{\psi_{\mathrm{approx}}}\right)\leq\varepsilon$
with the trace distance $D\left(\ket{\psi},\ket{\phi}\right)=\sqrt{1-\left|\braket{\psi|\phi}\right|^{2}}$.
This is accomplished by the state \begin{equation}
\ket{\psi_{n}}=\frac{1}{N}\sum_{i,j,k}b_{ijk}\ket{ijk}_{n},\label{eq:psi_n}\end{equation}
where $\ket{ijk}_{n}=\ket{i_{n}^{\left(1\right)}}\ket{j_{n}^{\left(2\right)}}\ket{k_{n}^{\left(3\right)}}$
are the basis states after $n$ iterations of the algorithm. The coefficients
$b_{ijk}$ are defined as follows: $b_{100}=b_{010}=b_{001}=0$, and
$b_{ijk}=\left(\braket{\psi|ijk}_{n}\right)^{\star}$ otherwise. $N$
assures normalization of $\ket{\psi_{n}}$. The trace distance between
$\ket{\psi}$ and $\ket{\psi_{n}}$ becomes $D\left(\ket{\psi},\ket{\psi_{n}}\right)=\sqrt{\left|\braket{\psi|100}_{n}\right|^{2}+\left|\braket{\psi|010}_{n}\right|^{2}+\left|\braket{\psi|001}_{n}\right|^{2}}$.
Using Proposition \ref{pro:limit} we see that $\lim_{n\rightarrow\infty}D\left(\ket{\psi},\ket{\psi_{n}}\right)=0$.
The wanted approximation $\ket{\psi_{\mathrm{approx}}}$ is obtained
by a state $\ket{\psi_{n}}$ such that $D\left(\ket{\psi},\ket{\psi_{n}}\right)\leq\varepsilon$.
\end{proof}
Thus we showed, that the algorithm presented in the beginning of this
section computes a generalized Schmidt decomposition of the given
pure state. As the generalized Schmidt decomposition is in general
not unique \cite{Ac'in2000,Carteret2000}, the result of the computation
may depend on the choice of the initial product state $\ket{\phi_{0}}$.
In particular, the final overlap $1-\left|\braket{000|\psi}\right|^{2}$
does not have to be the geometric measure of entanglement, even for
an infinite number of iterations.

Finally we note that all results presented in this section can be
extended to an arbitrary number of qubits. Then the equations have
to be changed accordingly. For four qubits, eq. (\ref{eq:limit})
becomes $\braket{1000|\psi}=\braket{0100|\psi}=\braket{0010|\psi}=\braket{0001|\psi}=0$.
Moreover the results even hold if the subsystems are not qubits, but
have arbitrary dimensions. For simplicity we consider a pure state
of three qutrits in the following. Again $\ket{000}=\lim_{n\rightarrow\infty}\ket{\phi_{n}}$
denotes the product state which is achieved after infinite number
of iterations. Using the same arguments as in the proof of Proposition
\ref{pro:limit} we see: \begin{eqnarray}
\braket{100|\psi} & = & \braket{010|\psi}=\braket{001|\psi}=0,\\
\braket{200|\psi} & = & \braket{020|\psi}=\braket{002|\psi}=0,\end{eqnarray}
where $\ket{1}$ and $\ket{2}$ are arbitrary states orthogonal to
$\ket{0}$ on the corresponding subspace. In order to find a generalized
Schmidt decomposition we also have to find specific states $\ket{1}$
and $\ket{2}$ for each subspace. Let $\ket{\psi}=\sum_{i=0}^{2}\sum_{j=0}^{2}\sum_{j=0}^{2}a_{ijk}\ket{ijk}$
be the expansion of the state in a product basis containing $\ket{000}$.
Then consider the unnormalized state $\ket{\tilde{\psi}}=\sum_{i=1}^{2}\sum_{j=1}^{2}\sum_{j=1}^{2}a_{ijk}\ket{ijk}$.
Since in the present stage of the algorithm we only have the knowledge
about the state $\ket{000}=\ket{0^{\left(1\right)}}\ket{0^{\left(2\right)}}\ket{0^{\left(3\right)}}$,
the state $\ket{\tilde{\psi}}$ can be computed as follows. Starting
from the state $\ket{\psi}$ we compute the unnormalized state $\ket{\alpha}=\ket{\psi}-\ket{000}\braket{000|\psi}$.
In the second step we compute $\ket{\beta}=\ket{\alpha}-\sum_{i<j}\ket{0^{\left(i\right)}0^{\left(j\right)}}\braket{0^{\left(i\right)}0^{\left(j\right)}|\alpha}$.
In the final step we get $\ket{\tilde{\psi}}=\ket{\beta}-\sum_{i}\ket{0^{\left(i\right)}}\braket{0^{\left(i\right)}|\beta}$.
The state $\ket{\tilde{\psi}}$ is an unnormalized pure state of three
qubits, and according to Proposition \ref{pro:limit} applying the
algorithm to it will give us the desired product basis $\left\{ \ket{ijk}\right\} $
with the property $\braket{211|\psi}=\braket{121|\psi}=\braket{112|\psi}=0$.
The expansion of the state $\ket{\psi}$ in the final product basis
$\left\{ \ket{ijk}\right\} $ is a generalized Schmidt decomposition
of $\ket{\psi}$ \cite{Carteret2000}. Let $\left\{ \ket{ijk}_{n}\right\} $
be the computed product basis after $n$ iterations of the algorithm.
The approximated generalized Schmidt decomposition of $\ket{\psi}$
becomes \begin{equation}
\ket{\psi_{n}}=\frac{1}{N}\sum_{i,j,k}b_{ijk}\ket{ijk}_{n}\end{equation}
with $b_{iij}=b_{iji}=b_{jii}=0$ for $i<j$ and $b_{ijk}=a_{ijk}$
otherwise. $N$ assures normalization of $\ket{\psi_{n}}$. The precision
of the approximation is then given by $D\left(\ket{\psi},\ket{\psi_{n}}\right)=\sqrt{\sum_{i<j}\left(\left|\braket{iij|\psi}\right|^{2}+\left|\braket{iji|\psi}\right|^{2}+\left|\braket{jii|\psi}\right|^{2}\right)}$.
In the same way we can find a generalized Schmidt decomposition for
any multipartite pure state with an arbitrary precision.

\subsection{\label{sub:Mixed-states}Mixed states}

The main idea of the algorithm for mixed states is a consequence of
the fact, that the geometric measure of entanglement may also be written
as \cite{Streltsov2010}\begin{equation}
E_{G}\left(\rho\right)=1-\max_{\sigma\in S}F\left(\rho,\sigma\right),\label{eq:Eg-1}\end{equation}
where $S$ denotes the set of separable states and $F\left(\rho,\sigma\right)=\left(\mathrm{Tr}\left[\sqrt{\sqrt{\rho}\sigma\sqrt{\rho}}\right]\right)^{2}$
is the fidelity. Let $\ket{\psi}\in\mathcal{H}\otimes\mathcal{H}_{a}$
be a purification of $\rho$. It can be written as \begin{equation}
\ket{\psi}=\sum_{i}\sqrt{p_{i}}\ket{\psi_{i}}\otimes\ket{i},\label{eq:psi}\end{equation}
with probabilities $p_{i}$ and $\rho=\sum_{i}p_{i}\ket{\psi_{i}}\bra{\psi_{i}}$.
According to Uhlmann's theorem \cite[page 410]{Nielsen2000} and using
(\ref{eq:Eg-1}) we can also write \begin{equation}
E_{G}\left(\rho\right)=1-\max_{\mathrm{Tr}_{a}\left[\ket{\phi}\bra{\phi}\right]\in S}\left|\braket{\psi|\phi}\right|^{2},\label{eq:Eg-2}\end{equation}
where the maximization is done over all states $\ket{\phi}\in\mathcal{H}\otimes\mathcal{H}_{a}$
which are purifications of a separable state. Note that any $\ket{\phi}$
can be written in the form \begin{equation}
\ket{\phi}=\sum_{j}\sqrt{q_{j}}\ket{\phi_{j}}\otimes U^{\dagger}\ket{j},\label{eq:phi}\end{equation}
with pure separable states $\ket{\phi_{j}}\in S$, probabilities $q_{j}$,
a unitary $U$ acting on the Hilbert space $\mathcal{H}_{a}$, and
$\braket{i|j}=\delta_{ij}$. 

From (\ref{eq:Eg-2}) we see, that we can get an approximation of
$E_{G}$ by maximizing the overlap $\left|\braket{\psi|\phi}\right|$
over all states $\ket{\phi}$ of the form (\ref{eq:phi}). Our approach
for this maximization is the following:
\begin{enumerate}
\item For fixed $q_{i}$ and $\ket{\phi_{i}}$ we find a unitary $U$ in
(\ref{eq:phi}) such that the overlap $\left|\braket{\psi|\phi}\right|$
is maximal.
\item For fixed $U$ and $q_{i}$ we find states $\ket{\phi_{i}}$ in (\ref{eq:phi})
such that the overlap $\left|\braket{\psi|\phi}\right|$ is maximal.
Note that this is in general only possible for bipartite states. For
multipartite states we compute $\ket{\phi_{i}}$ such that the overlap
$\left|\braket{\psi|\phi}\right|$ does not decrease. 
\item For fixed $U$ and $\ket{\phi_{i}}$ we find probabilities $q_{i}$
in (\ref{eq:phi}) such that the overlap $\left|\braket{\psi|\phi}\right|$
is maximal.
\end{enumerate}
Steps 1-3 are iterated until the increase of the overlap $\left|\braket{\psi|\phi}\right|$
is smaller than a small parameter $\varepsilon>0$. When the algorithm
stops, the approximation of the geometric measure of entanglement
is given by $\tilde{E}_{G}\left(\rho\right)=1-\left|\braket{\psi|\tilde{\phi}}\right|^{2}$,
where $\ket{\tilde{\phi}}$ is the final state of the form (\ref{eq:phi}).

In the following section we will discuss the properties of the algorithm.
Note that the order of the steps presented above can also be changed
without changing these properties.

\subsection{Properties}

In the following we will discuss some properties of the algorithm
presented above. In the first step the probabilities $q_{i}$ and
the separable pure states $\ket{\phi_{i}}$ are fixed. The product
$\left|\braket{\psi|\phi}\right|$ can be maximized using Uhlmann's
theorem \cite[page 410]{Nielsen2000}, it is maximal if $U$ is chosen
such that holds: \begin{equation}
A=\sqrt{AA^{\dagger}}U^{\dagger},\label{eq:A}\end{equation}
where $A$ is a matrix defined as $A=\sum_{i,j}\sqrt{p_{i}q_{j}}\braket{\phi_{j}|\psi_{i}}\ket{i}\bra{j}$.
Note that equation (\ref{eq:A}) is the polar decomposition of $A$,
which can be computed efficiently for any matrix $A$ \cite{Press2007}. 

In the second step of the algorithm we fix $U$, which was found in
the step before. The probabilities $q_{i}$ are also unchanged. In
order to maximize the overlap $\left|\braket{\psi|\phi}\right|$ the
separable states $\ket{\phi_{i}}$ have to be changed to the states
$\ket{\phi_{i}'}$ for which holds: \begin{equation}
\braket{\psi_{i}'|\phi_{i}'}=\sqrt{F_{s}\left(\ket{\psi_{i}'}\right)},\label{eq:psi'}\end{equation}
with the states $\ket{\psi_{i}'}=\frac{1}{\sqrt{p_{i}'}}\sum_{j}u_{ij}\sqrt{p_{j}}\ket{\psi_{j}}$,
where $u_{ij}=\braket{i|U|j}$ are elements of $U$ in the computational
basis, and $p_{i}'>0$ is chosen such that $\ket{\psi_{i}'}$ is normalized.
For bipartite states $\ket{\psi_{i}'}$ this step is evaluated according
to the discussion in Section \ref{sub:Bipartite-states}. If $\ket{\psi'_{i}}$
is multipartite, the closest separable state $\ket{\phi'_{i}}$ cannot
be found in general. However, there is a way to circumvent this problem
as follows. We apply the algorithm described in Section \ref{sub:Multipartite-states}
to the state $\ket{\psi_{i}'}$ with the initial product state $\ket{\phi_{i}}$,
thus getting a final product state $\ket{\phi_{i}'}$. The state $\ket{\phi_{i}'}$
is not necessarily the closest separable state to $\ket{\psi_{i}'}$,
however it will be closer to $\ket{\psi_{i}'}$ than the initial product
state $\ket{\phi_{i}}$. But then, if we replace $\ket{\phi_{i}}$
by $\ket{\phi_{i}'}$, we get a better approximation of the geometric
measure of entanglement. This can be seen by noting that for the overlaps
of the purifications holds: $\left|\braket{\psi|\phi'}\right|\geq\left|\braket{\psi|\phi}\right|$,
where in $\ket{\phi'}$ all product states $\ket{\phi_{i}}$ were
replaced by $\ket{\phi_{i}'}$.

In the last step of the iteration we fix $U$ which was found in the
first step, and the separable states $\ket{\phi_{i}'}$ which were
found in the second step. Using the method of Lagrange multipliers
we find the optimal probabilities: \begin{equation}
q_{i}'=\frac{p_{i}'\left|\braket{\psi_{i}'|\phi_{i}'}\right|^{2}}{\sum_{k}p_{k}'\left|\braket{\psi_{k}'|\phi_{k}'}\right|^{2}}.\end{equation}

Let $\tilde{E}_{n}\left(\rho\right)$ be the approximation of the
geometric measure of entanglement after $n$ iterations of the algorithm.
We will now prove the main property of the algorithm.
\begin{prop}
The approximated value of the geometric measure of entanglement never
increases in a step of the iteration: \begin{equation}
\tilde{E}_{n+1}\left(\rho\right)\leq\tilde{E}_{n}\left(\rho\right).\end{equation}
\end{prop}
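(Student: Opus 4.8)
The plan is to translate the claim about $\tilde E$ into one about the overlap of purifications and then check monotonicity step by step. Since the value returned after $n$ iterations is $\tilde E_n(\rho)=1-|\braket{\psi|\phi_n}|^2$, with $\ket{\phi_n}$ the purification of a separable state of the form (\ref{eq:phi}) produced after $n$ rounds, the inequality $\tilde E_{n+1}(\rho)\le\tilde E_n(\rho)$ is equivalent to $|\braket{\psi|\phi_{n+1}}|\ge|\braket{\psi|\phi_n}|$. So it suffices to show that none of the three sub-steps making up one iteration ever decreases the overlap $|\braket{\psi|\phi}|$; composing them then gives the result, and since the excerpt notes the three steps may be reordered without affecting the properties, the argument will be order-independent.

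First I would dispose of steps 1 and 3, which are immediate. In step 1 only $U$ is varied while $q_i$ and $\ket{\phi_i}$ are fixed; Uhlmann's theorem guarantees that the choice (\ref{eq:A}) attains the maximum of $|\braket{\psi|\phi}|$ over all admissible $U$, and the previous $U$ is one admissible choice, so the overlap cannot drop. Similarly, in step 3 only the probabilities $q_i$ are varied, and the Lagrange-multiplier computation yields the $q_i'$ maximizing $|\braket{\psi|\phi}|$ over the probability simplex, which again contains the old $q_i$; hence the overlap cannot drop. In both cases monotonicity holds simply because the update is, by construction, a maximizer over a set containing the status quo.

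The substantive part is step 2. Here I would first record the algebraic identity that, with $U$ fixed, rewrites the overlap of the two purifications as a weighted sum of pure-state overlaps: inserting (\ref{eq:psi}) and (\ref{eq:phi}) and absorbing $U$ into the decomposition of $\rho$ gives $\braket{\psi|\phi}=\sum_i\sqrt{p_i'q_i}\,\braket{\psi_i'|\phi_i}$, with the $\ket{\psi_i'}$ defined in the Properties section. Consequently, for fixed $U$ and $q_i$, optimizing $|\braket{\psi|\phi}|$ over the separable states $\ket{\phi_i}$ decouples into independently optimizing each $|\braket{\psi_i'|\phi_i}|$ (after fixing global phases so that the terms add constructively). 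For bipartite $\ket{\psi_i'}$ this is solved exactly by the construction of Section \ref{sub:Bipartite-states}, so each term is maximized and in particular does not decrease. For multipartite $\ket{\psi_i'}$ the closest separable state is unavailable, but running the pure-state algorithm of Section \ref{sub:Multipartite-states} on $\ket{\psi_i'}$ started from $\ket{\phi_i}$ returns $\ket{\phi_i'}$ with $|\braket{\psi_i'|\phi_i'}|\ge|\braket{\psi_i'|\phi_i}|$, because each elementary move of that inner routine replaces a single tensor factor by the one maximizing the single-party overlap (the reasoning underlying Proposition \ref{pro:limit}) and is therefore itself non-decreasing. Feeding these $\ket{\phi_i'}$ back through the identity yields $|\braket{\psi|\phi'}|\ge|\braket{\psi|\phi}|$, as already asserted in the Properties discussion.

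Chaining the three steps gives $|\braket{\psi|\phi_{n+1}}|\ge|\braket{\psi|\phi_n}|$, hence $\tilde E_{n+1}(\rho)=1-|\braket{\psi|\phi_{n+1}}|^2\le 1-|\braket{\psi|\phi_n}|^2=\tilde E_n(\rho)$. I expect the main obstacle to be precisely step 2 in the multipartite case: one must argue non-decrease of the overlap even though no exact maximization is performed, which requires carefully tracking that the inner pure-state sub-routine is monotone and that the decoupling identity genuinely transports its per-component guarantee to the global overlap of purifications.
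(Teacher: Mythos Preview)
Your proposal is correct and follows exactly the paper's approach: the paper's own proof simply observes that it suffices to show the overlap $|\braket{\psi|\phi}|$ never decreases in any of the three sub-steps, and then points to the definition of the algorithm (and the surrounding Properties discussion) for this fact. You have reproduced that one-line argument and unpacked the monotonicity of each sub-step in detail, including the decoupling identity $\braket{\psi|\phi}=\sum_i\sqrt{p_i'q_i}\,\braket{\psi_i'|\phi_i}$ and the per-factor monotonicity of the pure-state routine in the multipartite case; there is no substantive difference in strategy.
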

\begin{proof}
It is sufficient to show that the overlap of the purifications $\left|\braket{\psi|\phi}\right|$
does not decrease in any step of the algorithm. This is seen directly
from the definition of the algorithm in Section \ref{sub:Mixed-states}.
\end{proof}

\subsection{\label{sub:Implementation}Implementation}

First we set a small parameter $\varepsilon>0$. The algorithm starts
with a random decomposition $\left\{ p_{i},\ket{\psi_{i}}\right\} _{i=1}^{d^{2}}$
into $d^{2}$ elements of the state $ $$\rho=\sum_{i=1}^{d^{2}}p_{i}\ket{\psi_{i}}\bra{\psi_{i}}$
and a separable decomposition $\left\{ q_{i},\ket{\phi_{i}}\right\} _{i=1}^{d^{2}}$
of a random separable state $\sigma=\sum_{i=1}^{d^{2}}q_{i}\ket{\phi_{i}}\bra{\phi_{i}}$,
where we demand that $p_{i}>0$ and $q_{i}>0$ for all $1\leq i\leq d^{2}$.
The steps 1-3 from the Section \ref{sub:Mixed-states} can be implemented
as follows:
\begin{enumerate}
\item Find the singular value decomposition of the matrix $A=\sum_{i,j}\sqrt{p_{i}q_{j}}\braket{\phi_{j}|\psi_{i}}\ket{i}\bra{j}$,
i.e. $A=VDW$ with unitary matrices $V$, $W$ and diagonal nonnegative
matrix $D$. Define $U=W^{\dagger}V^{\dagger}$, noting that (\ref{eq:A})
is fulfilled.
\item Define unnormalized states \begin{equation}
\ket{\alpha_{i}}=\sum_{j=1}^{d^{2}}u_{ij}\sqrt{p_{j}}\ket{\psi_{j}},\end{equation}
with $u_{ij}=\braket{i|U|j}$. Compute $p_{i}'=\braket{\alpha_{i}|\alpha_{i}}$
and $\ket{\psi_{i}'}=\frac{1}{\sqrt{p_{i}'}}\ket{\alpha_{i}}$ for
all $i$. For bipartite states compute separable pure states $\ket{\phi_{i}'}\in S$
such that $\braket{\psi_{i}'|\phi_{i}'}=\sqrt{F_{s}\left(\ket{\psi_{i}'}\right)}$
. For multipartite states find product states $\ket{\phi_{i}'}$ which
are closer to $\ket{\psi_{i}'}$ than the states $\ket{\phi_{i}}$
computed in the step before. This can be done applying the algorithm
presented in Section \ref{sub:Multipartite-states} to the state $\ket{\psi_{i}'}$
with the initial product state $\ket{\phi_{i}}$.
\item Compute $q_{i}'=\frac{p_{i}'\left|\braket{\psi_{i}'|\phi_{i}'}\right|^{2}}{\sum_{k}p_{k}'\left|\braket{\psi_{k}'|\phi_{k}'}\right|^{2}}$. 
\end{enumerate}
After performing steps 1-3 define a new separable state $\sigma'=\sum_{i}q_{i}'\ket{\phi_{i}'}\bra{\phi_{i}'}$,
which is an approximation of the closest separable state to $\rho$.
If $F\left(\rho,\sigma'\right)-F\left(\rho,\sigma\right)>\varepsilon$,
set $\ket{\psi_{i}}=\ket{\psi_{i}'}$, $\ket{\phi_{i}}=\ket{\phi_{i}'}$,
$p_{i}=p_{i}'$ and $q_{i}=q_{i}'$ for all $i$ and go back to step
1, otherwise stop. The computed approximation is $\tilde{E}_{G}\left(\rho\right)=1-F\left(\rho,\sigma'\right)$.

\subsection{\label{sub:Convergence}Convergence}

One of the most important questions regarding algorithms computing
entanglement is whether or not the algorithm converges to the exact
value of the entanglement measure, at least for infinite number of
steps. For a general multipartite state with more than two parties
the algorithm will converge to the wrong value with some nonzero probability,
depending on the initial separable state. This is due to the fact,
that the algorithm for pure multipartite states presented in Section
\ref{sub:Multipartite-states} does not necessarily compute the correct
value \cite{Most2007,Most2010}. 

For bipartite mixed states there is no full answer to this question,
and testing the algorithm on bipartite states with known geometric
measure of entanglement we did not observe convergence to a wrong
value. However it can be shown that for some states and some special
choice of the purifications $\ket{\psi}$ and $\ket{\phi}$ the algorithm
does not compute the correct value even after an infinite number of
iterations. To see this we consider a separable state $\rho\in S$
with rank $r$ such that any separable decomposition of $\rho$ has
more elements than $r$. The existence of such states is assured \cite{Horodecki2009}.
Let now $\left\{ p_{i},\ket{\psi_{i}}\right\} _{i=1}^{r}$ be a decomposition
of $\rho$ which is optimal among all decompositions with $r$ elements,
i.e. the average entanglement $\sum_{i=1}^{r}p_{i}E_{G}\left(\ket{\psi_{i}}\right)$
is minimal among all decompositions into $r$ elements. Further let
$\ket{\phi_{i}}$ be the closest separable state to $\ket{\psi_{i}}$
and we also choose $q_{i}=\frac{p_{i}\left|\braket{\psi_{i}|\phi_{i}}\right|^{2}}{\sum_{k}p_{k}\left|\braket{\psi_{k}|\phi_{k}}\right|^{2}}$.
Now we start the algorithm with the decompositions $\left\{ p_{i},\ket{\psi_{i}}\right\} _{i=1}^{r}$
and $\left\{ q_{i},\ket{\phi_{i}}\right\} _{i=1}^{r}$, as described
in the previous section. Then the unitary $U$ which maximizes the
overlap of the purifications $\ket{\psi}=\sum_{i}\sqrt{p_{i}}\ket{\psi_{i}}\otimes\ket{i}$
and $\ket{\phi}=\sum_{j}\sqrt{q_{j}}\ket{\phi_{j}}\otimes U^{\dagger}\ket{j}$
is given by $U=\openone$. In the second step the algorithm will maximize
the overlaps $\braket{\phi_{i}|\psi_{i}}$, which are already optimal.
The same is true for the last step of the algorithm, where the probabilities
$q_{j}$ are optimized. Thus the algorithm preserves the initial separable
state, and does not compute the correct value even for infinite number
of steps.

To avoid the problem mentioned above the algorithm should always start
with a separable state chosen at random, i.e. with random initial
probabilities $q_{i}$ and random separable pure states $\ket{\phi_{i}}$.
Moreover, the number of initial nonzero probabilities $q_{i}$ should
be at least $\left(\dim\mathcal{H}\right)^{2}$.

In the following section we will test the algorithm and present some
applications for states with unknown geometric measure of entanglement.

\section{\label{sec:Applications}Applications}

\subsection{\label{sub:Testing-the-algorithm}Testing the algorithm}

\subsubsection{\label{sub:Two-qubits}Two qubits}

If $\rho$ is a two-qubit state, the geometric measure of entanglement
is given by (\ref{eq:twoqubits}). We applied our algorithm with $\varepsilon=10^{-15}$
to $10^{3}$ random states of two qubits and tested the computed value
$\tilde{E}_{G}$ against the exact value given in (\ref{eq:twoqubits}).
The maximal deviation $\tilde{E}_{G}-E_{G}$ from the exact value
was $6\cdot10^{-11}$. The average number of steps made by the algorithm
was 291.

\subsubsection{Isotropic states}

We also tested our algorithm on the isotropic states in dimension
$d\times d$, these are states of the form \begin{equation}
\rho=p\ket{\Phi^{+}}\bra{\Phi^{+}}+\frac{1-p}{d^{2}}\openone,\label{eq:iso-1}\end{equation}
with the maximally entangled state $\ket{\Phi^{+}}=\frac{1}{\sqrt{d}}\sum_{i=0}^{d}\ket{ii}$.
For these states an exact expression for the geometric measure of
entanglement was given in \cite{Wei2003}, the states are entangled
if and only if $p>\frac{1}{1+d}$. We applied our algorithm to the
state (\ref{eq:iso-1}) for $2\le d\leq3$ with the parameter $\varepsilon=10^{-15}$
for $p=0.01n$ and $0\leq n\leq99$. The difference between the approximated
value $\tilde{E}_{G}$ and the exact value $E_{G}$ was always less
than $10^{-10}$. 

In order to do the test for $d=4$ within a reasonable time some modifications
had to be applied. First, we minimized only over decompositions into
$d^{2}=16$ instead of $d^{4}=256$ pure states. Further, for $d=4$
the test was done on entangled states only, i.e. for $p=0.01n$ with
$20<n\leq99$. The difference between the approximation $\tilde{E}_{G}$
and the exact value $E_{G}$ never exceeded $10^{-13}$. The results
are summarized in Table \ref{tab:isotropic}. There $\bar{N}$ denotes
the average number of steps made by the algorithm. %
\begin{table}[H]
\begin{centering}
\begin{tabular}{c|c|c|c}
$d$ & $2$ & $3$ & $4$\tabularnewline
\hline 
$\tilde{E}_{G}-E_{G}$ & $<10^{-13}$ & $<10^{-10}$ & $<10^{-13}$\tabularnewline
\hline 
$\bar{N}$ & $80$ & $516$ & $2259$\tabularnewline
\end{tabular}
\par\end{centering}

\caption{\label{tab:isotropic} Precision of the approximation $\tilde{E}_{G}-E_{G}$
and the average number of steps $\bar{N}$ for the isotropic states
(\ref{eq:iso-1}) with parameter $\varepsilon=10^{-15}$.}

\end{table}

For the cases tested above the algorithm always converged into the
correct value of $E_{G}$ within the precision given in Table \ref{tab:isotropic}
with a single run of the algorithm. Note that in general more than
one run with different initial parameters should be done to avoid
convergence into a wrong value. Further we see from Table \ref{tab:isotropic}
that the parameter $\varepsilon$ should not be used directly to quantify
the precision of the approximation, although the deviation from the
exact value is very small.

\subsubsection{Four qubits}

In \cite{Guhne2008a} the authors computed the geometric measure of
entanglement for a class of mixed states of four qubits. We tested
our algorithm on the state $\rho\left(t\right)$, which for $t=0$
is defined as the four-qubit cluster state \begin{equation}
\ket{\mathrm{CL}_{4}}=\frac{1}{2}\left(\ket{0000}+\ket{0011}+\ket{1100}-\ket{1111}\right).\end{equation}
For $t>0$ the diagonal terms of $\rho$ are left invariant, and the
off-diagonal components decay exponentially with $t$, i.e. \begin{equation}
\rho_{kl}\left(t\right)=\begin{cases}
\rho_{kl}\left(0\right) & \textrm{for }k=l,\\
e^{-t}\rho_{kl}\left(0\right) & \textrm{for }k\neq l.\end{cases}\label{eq:guhne}\end{equation}
We applied our algorithm with parameter $\varepsilon=10^{-15}$ on
the states $\rho\left(t\right)$ with $t=0.01n$ for all $1\leq n\leq100$.
The discrepancy between the approximated value and the exact value
given in \cite{Guhne2008a} was always smaller than $10^{-14}$. 

The same test was done for the state $\tilde{\rho}\left(t\right)$,
which for $t=0$ is defined as the four-qubit $W$ state \[
\ket{W_{4}}=\frac{1}{2}\left(\ket{0001}+\ket{0010}+\ket{0100}+\ket{1000}\right),\]
and for $t>0$ the off-diagonal components decay exponentially as
given in (\ref{eq:guhne}). There the discrepancy between the approximation
and the exact value was always smaller than $10^{-11}$.

Finally we tested our algorithm on the four-qubit state $\bar{\rho}\left(t\right)$,
which for $t=0$ is defined as the symmetrized Dicke state \begin{eqnarray}
\ket{D_{4}} & = & \frac{1}{\sqrt{6}}\left(\ket{0011}+\ket{0101}+\ket{1001}+\ket{1100}+\ket{0110}\right.\nonumber \\
 &  & +\left.\ket{1010}\right).\end{eqnarray}
Again for $t>0$ the off-diagonal components decay as in (\ref{eq:guhne}).
The test was done with $t=0.01n$ for all $1\leq n\leq100$, the difference
$\tilde{E}_{G}-E_{G}$ was always smaller than $10^{-12}$. The results
are summarized in Table \ref{tab:four qubits}. There $\bar{N}$ denotes
the average number of iterations made by the algorithm. %
\begin{table}[H]
\begin{centering}
\begin{tabular}{c|c|c|c}
$\rho\left(0\right)$ & $\ket{\mathrm{CL}_{4}}$ & $\ket{W_{4}}$ & $\ket{D_{4}}$\tabularnewline
\hline 
$\tilde{E}_{G}-E_{G}$ & $<10^{-14}$ & $<10^{-11}$ & $<10^{-12}$\tabularnewline
\hline 
$\bar{N}$ & $12$ & $173$ & $126$\tabularnewline
\end{tabular}
\par\end{centering}

\caption{\label{tab:four qubits} Precision of the approximation $\tilde{E}_{G}-E_{G}$
and the average number of steps $\bar{N}$ for the four-qubit states
presented in the text with parameter $\varepsilon=10^{-15}$.}

\end{table}

Note that the optimizations above were done over pure state decompositions
into $2^{4}$ elements instead of $2^{8}$. This reduction was needed
in order to do the computation within a reasonable time. Moreover
we note that for very small parameter $t=0.01$ we sometimes observed
convergence into a wrong value. This is due to the fact that for small
$t$ the state $\rho\left(t\right)$ is almost pure. As was mentioned
in Section \ref{sub:Convergence} the algorithm can converge to wrong
values for pure multipartite states. In these cases the algorithm
was started again with random initial parameters. To get an impression
we mention that for the last example $\bar{\rho}\left(0.01\right)$
the algorithm sometimes converged to $\tilde{E}_{G}-E_{G}\approx8\cdot10^{-4}$.

\subsubsection{Comparison with other algorithms}

A significant difference between our algorithm and the algorithms
presented in \cite{Audenaert2001,Rothlisberger2009} is the fact,
that our algorithm implies only the solution of the eigenproblem and
finding a singular value decomposition. For both problems efficient
numerical algorithms exist \cite{Press2007}, implying that each step
of our algorithm can be done efficiently. The algorithms based on
conjugate gradients usually imply a line search \cite{Audenaert2001}.
It is not known to us whether a line search can in general be done
efficiently for the problem considered here.

As noted in Section \ref{sub:Two-qubits}, the average number of iterations
made by our algorithm for random two-qubit states with parameter $\varepsilon=10^{-15}$
was 291. This is comparable to the performance of the conjugate gradient
algorithm, for comparison see Figure 1 in \cite{Rothlisberger2009}.

\subsection{On additivity of entanglement}

A measure of entanglement $E$ is called additive, if for any two
states $\rho^{AB}$ and $\sigma^{AB}$ holds \cite{Plenio2007}:\begin{equation}
E\left(\rho^{AB}\otimes\sigma^{AB}\right)=E\left(\rho^{AB}\right)+E\left(\sigma^{AB}\right),\end{equation}
where the entanglement between the parties $A$ and $B$ is considered.

For pure states $ $$\ket{\psi^{AB}}$ and $\ket{\phi^{AB}}$ we see
that \begin{equation}
F_{s}\left(\ket{\psi^{AB}}\otimes\ket{\phi^{AB}}\right)=F_{s}\left(\ket{\psi^{AB}}\right)F_{s}\left(\ket{\phi^{AB}}\right),\label{eq:additivity}\end{equation}
with $F_{s}\left(\rho\right)=\underset{\sigma\in S}{\max}F\left(\rho,\sigma\right)$
and the fidelity $F\left(\rho,\sigma\right)=\left(\mathrm{Tr}\left[\sqrt{\sqrt{\rho}\sigma\sqrt{\rho}}\right]\right)^{2}$.
From (\ref{eq:additivity}) we see that the geometric measure of entanglement
is not additive. Note, that for the entanglement of formation nonadditivity
has also been proved \cite{Hastings2009}.

We consider the \emph{logarithmic entanglement} \begin{equation}
E_{log}\left(\rho\right)=-\log_{2}F_{s}\left(\rho\right),\end{equation}
which is additive for pure bipartite states, as is seen from (\ref{eq:additivity}).
In general holds: $F_{s}\left(\rho^{AB}\otimes\sigma^{AB}\right)\geq F_{s}\left(\rho^{AB}\right)F_{s}\left(\sigma^{AB}\right)$,
and thus the logarithmic entanglement is subadditive: \begin{equation}
E_{log}\left(\rho^{AB}\otimes\sigma^{AB}\right)\leq E_{log}\left(\rho^{AB}\right)+E_{log}\left(\sigma^{AB}\right).\label{eq:additivity-1}\end{equation}
We use our algorithm to test the inequality (\ref{eq:additivity-1}).
Note that for two-qubit states $\rho$ we get $F_{s}\left(\rho\right)=\frac{1}{2}\left(1+\sqrt{1-C\left(\rho\right)^{2}}\right)$.
We take $\rho^{AB}$ and $\sigma^{AB}$ to be random states of two
qubits, and apply the algorithm to $\rho^{AB}\otimes\sigma^{AB}$
with parameter $\varepsilon=10^{-7}$. This procedure is repeated
100 times, each time the computed approximation $\tilde{F}_{s}\left(\rho^{AB}\otimes\sigma^{AB}\right)$
was slightly below $F_{s}\left(\rho^{AB}\right)F_{s}\left(\sigma^{AB}\right)$,
which means that we could not disprove additivity of logarithmic entanglement
in this way. The difference $F_{s}\left(\rho^{AB}\right)F_{s}\left(\sigma^{AB}\right)-\tilde{F}_{s}\left(\rho^{AB}\otimes\sigma^{AB}\right)$
was always smaller than $10^{-5}$. \[
\]

\subsection{Applications to 3 qubits}

\begin{figure}
\noindent \begin{centering}
\includegraphics[width=1\columnwidth]{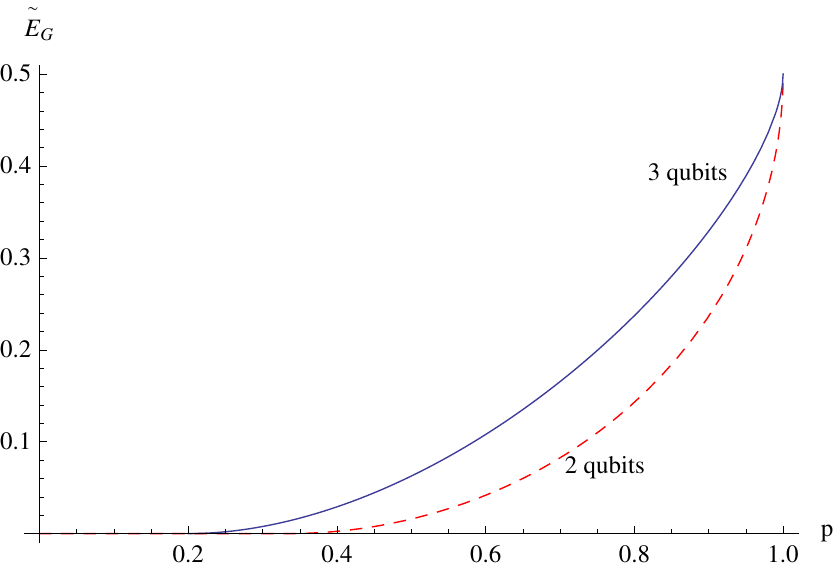}
\par\end{centering}

\caption{\label{fig:isotropic} Approximation of the geometric measure of entanglement
$\tilde{E}_{G}$ for isotropic states of three-qubits given in (\ref{eq:iso})
as a function of $p$ (solid line) compared to the two-qubit case
(dashed line).}

\end{figure}
In this section we apply our algorithm to 3-qubit states with unknown
value of $E_{G}$. If $d$ is the dimension of the total Hilbert space,
then for any $\rho$ there always exists an optimal decomposition
with at most $d^{2}$ elements \cite{Streltsov2010}. A decomposition
$\left\{ p_{i},\ket{\psi_{i}}\right\} $ of a state $\rho=\sum_{i}p_{i}\ket{\psi_{i}}\bra{\psi_{i}}$
is called optimal if its average entanglement is equal to the geometric
measure of entanglement: $\sum_{i}p_{i}E_{G}\left(\ket{\psi_{i}}\right)=E_{G}\left(\rho\right)$.
In order to make sure that the algorithm always has the chance to
find the optimal decomposition, all minimizations in this section
were done over decompositions into $d^{2}=2^{6}=64$ pure states.
In order to do the computation within a reasonable time we used the
parameter $\varepsilon=10^{-7}$.

\subsubsection{Isotropic states}

Isotropic states of three qubits have the form \begin{equation}
\rho=p\ket{\mathrm{GHZ}}\bra{\mathrm{GHZ}}+\frac{1-p}{8}\openone,\label{eq:iso}\end{equation}
with $\ket{\mathrm{GHZ}}=\frac{1}{\sqrt{2}}\left(\ket{000}+\ket{111}\right)$.
They are known to be fully separable if and only if $p\leq\frac{1}{5}$
\cite{Dur1999}. We apply our algorithm to these states with parameter
$\varepsilon=10^{-7}$ for $p>\frac{1}{5}$. The result is shown in
Figure \ref{fig:isotropic} (solid line). The plot can be compared
to the geometric measure of entanglement of the isotropic states of
two qubits, see dashed line in Figure \ref{fig:isotropic}. In the
limit $p\rightarrow1$ the state becomes the pure GHZ state with  $E_{G}\left(\ket{\mathrm{GHZ}}\right)=\frac{1}{2}$
\cite{Wei2003}.

\subsubsection{XX model}

As a final example we apply our algorithm to the isotropic XX model
of 3 qubits in a constant magnetic field. The corresponding Hamiltonian
is given by \cite{Lieb1961,Katsura1962}\begin{equation}
H=\frac{B}{2}\sum_{i=1}^{3}\sigma_{i}^{z}+J\sum_{i=1}^{3}\left(\sigma_{i}^{x}\sigma_{i+1}^{x}+\sigma_{i}^{y}\sigma_{i+1}^{y}\right)\label{eq:H}\end{equation}
with periodic boundary conditions: $\sigma_{4}^{x}=\sigma_{1}^{x}$
and $\sigma_{4}^{y}=\sigma_{1}^{y}$. In thermal equilibrium the system
is found in the mixed state $\rho=\frac{e^{-\frac{H}{kT}}}{Z}$ with
$Z=\mathrm{Tr}\left[e^{-\frac{H}{kT}}\right]$. In the following we
set $k=1$. 

\begin{figure}
\noindent \begin{centering}
\includegraphics[width=1\columnwidth]{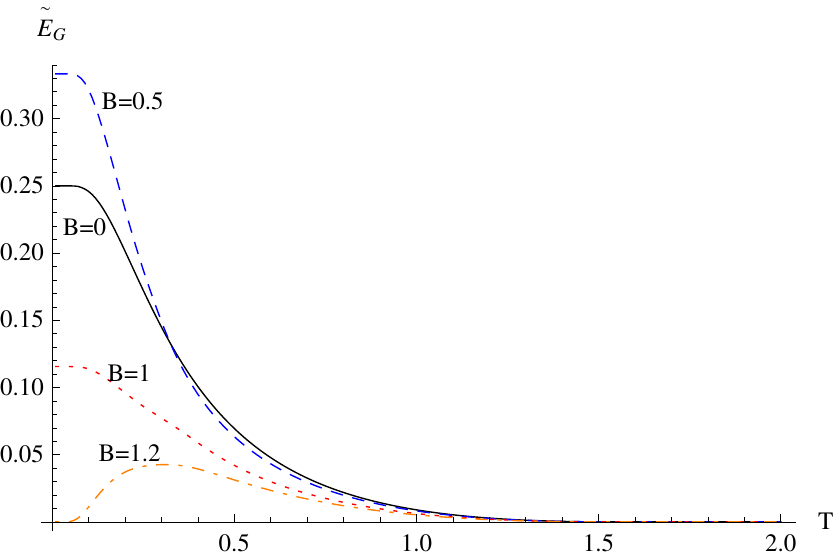}
\par\end{centering}

\caption{\label{fig:XY} Approximation of the geometric measure of entanglement
$\tilde{E}_{G}$ plotted as function of the temperature $T$ for $\rho=\frac{e^{-\frac{H}{kT}}}{Z}$
with $H$ given in (\ref{eq:H}). The parameter $J$ is set to $\frac{1}{2}$,
and $k=1$.}

\end{figure}
The results of the approximation with parameter $\varepsilon=10^{-7}$
are shown in Figure \ref{fig:XY}. They can be compared to the results
for two qubits in \cite[Figure 4]{Wang2001}. For different values
of the magnetic field $B$ we observe a different behavior of the
system in the low temperature limit. This behavior will be explained
in the following. 

Note that the Hamiltonian (\ref{eq:H}) has four nondegenerate eigenvalues
$\pm\frac{3}{2}B$, and $4J\pm\frac{1}{2}B$. Further the following
two eigenvalues are degenerated twice: $-2J\pm\frac{1}{2}B$. For
vanishing magnetic field the ground state of the system is a mixture
of the four eigenstates corresponding to the eigenvalue $-2J$ with
equal probabilities. In this case we get $\tilde{E}_{G}\approx\frac{1}{4}$
for $T\rightarrow0$, see solid curve in Figure \ref{fig:XY}. For
small nonzero magnetic field $0<B<2J$ the ground state of the system
is the mixture of the eigenstates corresponding to the eigenvalue
$-2J-\frac{1}{2}B$. As can be seen from the dashed curve in Figure
\ref{fig:XY}, for $T\rightarrow0$ the approximation becomes $ $$\tilde{E}_{G}\approx\frac{1}{3}$
in this case. In the case $B=2J$, there are three eigenstates corresponding
to the smallest eigenvalue $-3J$. The approximated value for $T\rightarrow0$
in this case becomes $\tilde{E}_{G}\approx0.116$, see dotted curve
in Figure \ref{fig:XY}. Finally, for $B>2J$ the ground state is
the product state $\ket{111}$, and the entanglement vanishes for
$T\rightarrow0$, as is seen from the dot-dashed curve in Figure \ref{fig:XY}. 

\begin{figure}
\noindent \begin{centering}
\includegraphics[width=1\columnwidth]{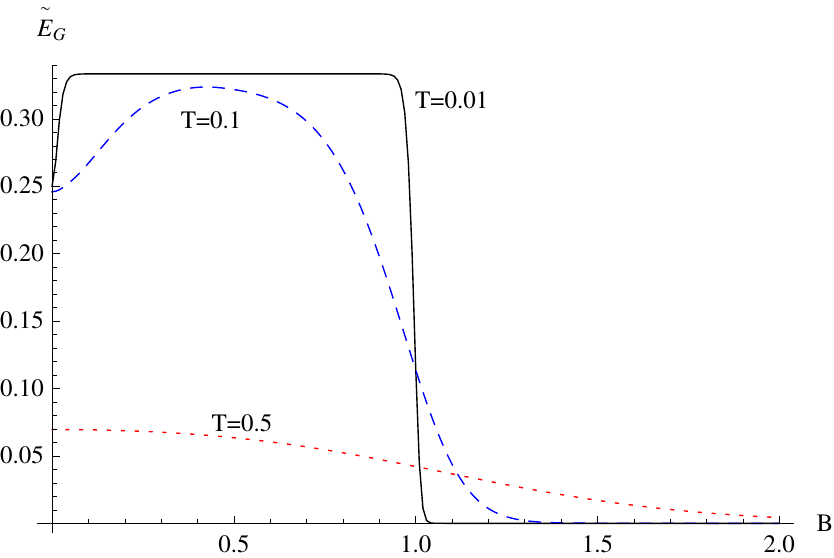}
\par\end{centering}

\caption{\label{fig:XY-1} Approximation of the geometric measure of entanglement
$\tilde{E}_{G}$ for fixed values of $T$ plotted as function of the
magnetic field $B$. The parameter $J$ is set to $\frac{1}{2}$,
and $k=1$.}

\end{figure}
In Figure \ref{fig:XY-1} we show the plot of $\tilde{E}_{G}$ as
function of the magnetic field $B$ for three different temperatures
$T$. For $T\rightarrow0$ we observe that $\tilde{E}_{G}$ becomes
a nonanalytic function of $B$ for two different values of the magnetic
field, namely for $B=0$ and $B=2J$. This is a significant difference
to the two-qubit case, where such behavior occurred only for a single
value of $B$ \cite[Figure 5]{Wang2001}.

\section{\label{sec:Concluding-remarks}Concluding remarks}

In this paper we presented an algorithm for approximating the geometric
measure of entanglement for arbitrary multipartite mixed states. The
algorithm is based on a connection between the geometric measure of
entanglement and the fidelity \cite{Streltsov2010}. It is easy implementable,
since it implies only the solution of an eigenproblem and finding
a singular value decomposition. We tested our algorithm on bipartite
and multipartite mixed states, where an exact formula for the geometric
measure of entanglement is known. In all cases we found convergence
to the exact value. For two qubits, the performance of our algorithm
is comparable to the performance of the algorithms based on conjugate
gradients. We also applied our algorithms to the isotropic state of
three qubits, and the three-qubit XX-model with external magnetic
field. 

In our tests on bipartite mixed states with known value of the geometric
measure of entanglement our algorithm always converged to the correct
value within a given precision. It remains an open question whether
this is always the case. For quantum states with more than two parties
the algorithm can converge to wrong values with nonzero probability.
In general more than one run of the algorithm with different initial
parameters should be performed.

\bibliographystyle{apsrev4-1}
\bibliography{literature}

\end{document}